\documentclass[12pt, a4paper]{article}
\usepackage{latexsym,amsmath,amsfonts,amssymb,amsthm}
\usepackage{graphicx}
\usepackage{indentfirst}
\usepackage{cite}
\usepackage[cp1251]{inputenc}

\newtheorem{theorem}{Theorem}

\newtheorem{remark}{Remark}

\linespread{1.6}

\sloppy

\textheight=24.5 true cm

\textwidth=17 true cm

\evensidemargin=20 mm

\oddsidemargin=0 mm

\topmargin=0 mm

\headheight=0 mm

\headsep=0 mm

\begin{document}

\begin{center}
 {\Large \bf Lie symmetries of fundamental solutions of one (2+1)-dimensional ultra-parabolic Fokker--Planck--Kolmogorov equation}

\medskip \medskip

{\bf Sergii Kovalenko, $^{1}$} {\bf Valeriy Stogniy $^2$} {\bf and  Maksym Tertychnyi $^3$}
\medskip \medskip \medskip
\\
{\it $^1$~Department of Physics, Faculty of Oil, Gas and Nature Engineering,
\\
Poltava National Technical University, Poltava, Ukraine}
\\
(e-mail: kovalenko@imath.kiev.ua)
\medskip
\\
{\it $^2$~Department of Mathematical Physics, Faculty of Physics and Mathematics,
\\
National Technical University of Ukraine "Kyiv Polytechnic Institute", Kyiv, Ukraine}
\\
(e-mail: valeriy\_stogniy@mail.ru)
\medskip
\\
{\it $^3$~Department of Mathematics and Statistics, Faculty of Science,
\\
University of Calgary, Calgary, Canada}
\\
(e-mail: maksym.tertychnyi@gmail.com)

\medskip \medskip

\end{center}

\begin{abstract}
A (2+1)-dimensional linear ultra-parabolic Fokker--Planck--Kolmogorov equation is investigated from the group-theoretical point of view. By using the Berest--Aksenov approach, an algebra of invariance of fundamental solutions of the equation is found. A fundamental solution of the equation under study is computed in an explicit form as a weak invariant solution.
\end{abstract}

\textbf{Keywords:} Fokker--Planck--Kolmogorov equation, fundamental solution, Lie symmetries.

\medskip

\textbf{2010 Mathematical Subject Classification:} 22E70, 35K70, 35Q84.

\newpage

\section{Introduction}

 The idea that a solution of any well-posed boundary problem for a particular linear partial differential equation (PDE) can be reduced to the construction of some solution of a special type (known as a fundamental solution) is one of the most powerful approaches in a classical mathematical physics. The method of integral transformations is a powerful and well-developed tool for the construction of the solutions of such a type.  Unfortunately, this method is efficient only for linear PDE's with constant(or, at least, analytic) coefficients.  Moreover, in some cases it is very difficult to investigate qualitative properties of fundamental solutions, because they can not be presented in an explicit form, but only in the form of inverse integral transformations. As a result, the development of more direct methods for the construction of fundamental solutions is an important problem of a modern mathematical physics. Especially, it is urgent in the case  of PDE's with alternating coefficients. One of such modern methods is a classical Lie method for investigation of symmetrical properties of PDE's.

Group analysis of differential equations is a mathematical theory with an area of interest in the symmetry properties of differential equations.  Basics of group analysis are in fundame\-ntal works of   S. Lie and his scholars (see, for instance, \cite{Lie}). Namely, Lie developed and was the first to use the tools of symmetry reduction.  The main idea  of this method is to search for the solution of the equation under consideration in the form of a special substitution (ansatz), that reduces the given equation to the differential equation with less number of independent variables.

Further development of group-theoretical methods of differential equations is mainly connected with works  of the following mathematicians: G.~Bikhoff \cite{birk}, L.\,I.~Sedov \cite{sedov}, A.\,J.\,A.~Mo\-rgan \cite{morg}, L.\,V.~Ovsiannikov \cite{ovs}, N.\,H.~Ibragimov \cite{ibr85,ibr92, grig10}, P.\,J.~Olver \cite{olv86, olv95}, W.\,I.~Fushchich \cite{fn83,fn90,fss} and others. At present, symmetry properties of many well-known equations of mechanics, gas dynamics, quantum physics, etc were  investigated.  Detailed analysis of results of research of symmetry properties of a wide range of linear and non-linear differential equations can be found in the monographs  \cite{ovs,ibr85,fn83,fn90,fss,andr,miler,lahno}.

It is well-known, that as a rule fundamental solutions of linear PDE's are invariant with respect to the transformations, admitted by the prescribed equation \cite{ibr92,aks95,aks09,berest91,berest93,berest95,gaz,crad04,crad09,crad12}. In particular, fundamental solutions of  classical equations of mathematical physics such as the Laplace equation, the Heat equation, the Wave equation, etc have this property  (see, for example, \cite{aks09}). It means, that if a linear PDE (especially in case of an equation with alternating coefficients) has non-trivial symmetry properties, then we can use group-theoretical methods to construct its fundamental solution.

The object of our investigation is the linear (2+1)-dimensional ultra-parabolic Fokker--Planck--Kolmogorov equation
\begin{equation}\label{0.1}
u_t - u_{xx} + x u_y = 0, \quad (x,y,t) \in \mathbb{R}^3,
\end{equation}
where $u = u(t,x,y)$, $u_t = \frac{\partial u}{\partial t}$, $u_y = \frac{\partial u}{\partial y}$, $u_{xx} = \frac{\partial^2 u}{\partial x^2}$.

Eq. \eqref{0.1} is the simplest and lowest dimensional version of the following (2$n$+1)-dimensional linear ultra-parabolic equation
\begin{equation}\label{0.2}
u_t - \sum_{j, \, l=1}^{n}(k_{j\,l}(t,x,y) u)_{x_j x_l} + \sum_{j=1}^{n}(f_j(t,x,y) u)_{x_j} + \sum_{j=1}^{n} x_j u_{y_j} = 0, \quad (x,y,t) \in \mathbb{R}^{2n+1}.
\end{equation}

A.\,N. Kolmogorov introduced \eqref{0.2} in 1934 to describe the probability density of a system with $2n$ degrees of freedom \cite{kolm}. Here, the $2n$-dimensional space is the phase space, $x = (x_1, \ldots, x_n)$ is the velocity and $y = (y_1, \ldots, y_n)$ is the position of the system. It should be also stressed that Eq. \eqref{0.1} arises in mathematical finance in some generalization of the Black--Scholes model (see, for instance, \cite{bar01, pas05}).

Research of symmetry properties of Eq.  \eqref{0.1} was commenced in work\cite{lan94}, where several point transformations of symmetries were found. But complete group analysis for this equation was not performed.

Maximal invariance algebra of Eq. \eqref{0.1} in Lie sense was calculated in work \cite{spichak}. In addition, in this article an optimal system of sub-algebras was found for the calculated invariance algebra.  Using two-dimensional sub-algebras, it was also performed the symmetry reduction and constructed several explicit  invariant solutions of Eq. \eqref{0.1}.

In this article, we continue research of symmetry properties of Eq.  \eqref{0.1}. The goal of our work is to find an invariance algebra of fundamental solutions of this equation and to construct in an explicit form the fundamental solution of Eq. \eqref{0.1} using already found symmetry algebra.

\section{Symmetries of fundamental solutions of  linear PDE's}

Consider linear homogeneous PDE of $p$-th order with  $m$ independent variables
\begin{equation}\label{1.1}
Lu \equiv \sum_{|\alpha|=0}^{p} A_{\alpha}(x) D^{\alpha}u=0, \quad x \in \mathbb{R}^m.
\end{equation}
In \eqref{1.1} we use standard notations: $x = (x^1, \ldots, x^m)$, $\alpha = (\alpha_1, \ldots, \alpha_m)$ is a multi-index with integer non-negative components, $|\alpha| = \alpha_1 + \ldots + \alpha_m$;
\[
D^{\alpha} \equiv \left(\frac{\partial}{\partial x^1} \right)^{\alpha_1} \ldots \left(\frac{\partial}{\partial x^m} \right)^{\alpha_m};
\]
$A_{\alpha}(x)$ are some smooth functions of the variable $x$.

\emph{Fundamental solution} of Eq. \eqref{1.1} is a function $u(x,x_0)$ (namely, generalized), that yields the following equation
\begin{equation}\label{1.2}
Lu = \delta(x-x_0),
\end{equation}
where $\delta(x-x_0)$ is the Dirac delta function.

Standard methods to find fundamental solutions of linear PDE's are the method of integral transformations (especially, in case of the equations with constant coefficients), the Green's functions method, etc. \cite{vlad, evans}. Here we consider an algorithm to find fundamental solutions of linear homogeneous PDE's by using symmetry groups of this equation.

Remind, that a non-degenerate local substitution of the variables  $x,u$
\begin{equation}\label{1.3}
\bar{x}^i = f^i(x,u,a), \ \bar{u} = g(x,u,a), \quad i = 1, \ldots, m,
\end{equation}
depending on a continuous parameter $a$ is called a \emph{symmetry transformation} of Eq. \eqref{1.1}, if this equation does not change its form with respect to the new variables $\bar{x}$ and $\bar{u}$. A set $G$ of all such transformations forms a Lie group (local, more precisely), that is called a \emph{symmetry group} (or an acceptable group) of Eq. \eqref{1.1}.

According to the Lie theory, the construction of the symmetry group $G$ of Eq. \eqref{1.1} is equivalent to finding its infinitesimal transformations:
\begin{equation}\label{1.4}
\bar{x}^i \approx x^i + a \cdot \xi^i(x,u), \ \ \bar{u} \approx u + a \cdot \eta(x,u), \quad i = 1, \ldots, m,
\end{equation}
where  $x^i(x,u)$ and $\eta(x,u)$ are some smooth functions.

Linear differential operator of the first order
\begin{equation}\label{1.5}
X = \sum_{i=1}^{m} \xi^i(x,u) \frac{\partial}{\partial x^i} + \eta(x,u) \frac{\partial}{\partial u}
\end{equation}
is called an \emph{infinitesimal operator} of the group $G$. The operator $X$ is also called a symmetry operator  of Eq. \eqref{1.1}.

The group transformations \eqref{1.3} corresponding  to the infinitesimal transformations \eqref{1.4} with the operator \eqref{1.5} are found using so called Lie equations:
\[
\frac{d \bar{x}^i}{d a} = \xi^i(\bar{x},\bar{u}), \ \frac{d \bar{u}}{d a} = \eta(\bar{x},\bar{u}), \quad i = 1, \ldots, m
\]
with the initial conditions
\[
\bar{x}^i \vert_{a=0} = x^i, \ \bar{u} \vert_{a=0} = u.
\]

So called infinitesimal criterion of invariance plays a  fundamental role in the symmetry analysis of differential equation, that is in case of Eq.  \eqref{1.1} can be formulated in the following form.

\begin{theorem}\label{t1}
The infinitesimal operator \eqref{1.5}  is a symmetry operator of Eq. \eqref{1.1}, if and only if there exists such a function $\lambda = \lambda(x)$, that yields the following identities:
\begin{equation}\label{1.6}
\underset{p}{X}(Lu) \equiv \lambda(x) \cdot Lu
\end{equation}
for any function $u = u(x)$ from the domain of Eq. \eqref{1.1}.
\end{theorem}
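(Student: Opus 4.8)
The plan is to derive the stated identity from the general infinitesimal invariance criterion of Lie theory, specialised to the linear operator $L$. Recall that, under the usual maximal-rank assumption on the equation, an operator $X$ generates a one-parameter symmetry group of a PDE $F(x,u^{(p)})=0$ if and only if its $p$-th prolongation satisfies $\underset{p}{X}F=0$ on the solution manifold $\{F=0\}$ in the jet space. I would take this criterion as the starting point with $F=Lu$, after which the task splits into the two implications of the asserted equivalence.

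For the sufficiency part, I would assume $\underset{p}{X}(Lu)\equiv\lambda(x)\,Lu$ and show directly that the flow $g_a$ generated by $X$ maps solutions to solutions. Fixing a point $z$ of the jet space and writing $\Psi(a):=\big(Lu\big)\big(\mathrm{pr}^{(p)}g_a(z)\big)$, the prolongation property gives the scalar linear ODE $\frac{d}{da}\Psi(a)=\lambda\big(x(a)\big)\,\Psi(a)$, whence $\Psi(a)=\Psi(0)\exp\!\big(\int_0^a\lambda(x(s))\,ds\big)$, where $x(a)$ denotes the $x$-component of the flow. Thus $\Psi(0)=0$ forces $\Psi(a)\equiv0$, so the zero set of $Lu$ is invariant and $X$ is a symmetry operator; in particular the criterion $\underset{p}{X}(Lu)|_{Lu=0}=0$ holds trivially because the right-hand side $\lambda(x)\,Lu$ vanishes there.

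For the necessity part, I would start from the fact that $\underset{p}{X}(Lu)=0$ on $\{Lu=0\}$. Here the linearity of $L$ is essential: $Lu=\sum_{|\alpha|\le p}A_\alpha(x)\,u_\alpha$ is a polynomial of degree one in the fibre (jet) coordinates $u_\alpha$, so its zero locus is, fibrewise, a hyperplane whose defining function has non-vanishing gradient. A Hadamard-type division lemma then lets me write $\underset{p}{X}(Lu)=\mu\,Lu$ for some smooth factor $\mu$ on the jet space. The remaining and most delicate step is to prove that $\mu$ depends on $x$ alone. For this I would compute $\underset{p}{X}(Lu)$ explicitly from the prolongation formula, obtaining $\sum_\alpha\big(\xi^i\partial_{x^i}A_\alpha\big)u_\alpha+\sum_\alpha A_\alpha\eta^\alpha$, and then compare the dependence on the jet coordinates on both sides of $\underset{p}{X}(Lu)=\mu\,Lu$; matching the part that is of degree one in the $u_\alpha$ against the degree-one polynomial $Lu$ pins down $\mu=\lambda(x)$.

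I expect the degree-matching argument that forces $\mu=\lambda(x)$ to be the main obstacle, since it is precisely here that the linear structure of the equation (as opposed to a general nonlinear $F$, where the multiplier may legitimately depend on $u$ and its derivatives) must be exploited. Once that reduction is secured, combining the two implications yields the identity $\underset{p}{X}(Lu)\equiv\lambda(x)\,Lu$ as the exact invariance condition for the linear equation \eqref{1.1}.
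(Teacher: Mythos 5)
The paper never proves Theorem~\ref{t1} at all: it is quoted as the classical infinitesimal invariance criterion for linear equations (the prolongation formula being referenced to \cite{ovs,fss,lahno,olv86}), so your attempt has to be judged on its own merits rather than against an in-paper argument. On those merits, your sufficiency half is complete and standard: the identity \eqref{1.6} gives the scalar linear ODE $\frac{d}{da}\Psi(a)=\lambda(x(a))\Psi(a)$ along the prolonged flow, whence invariance of the zero set of $Lu$. The Hadamard division step in the necessity half is also sound, granted the implicit maximal-rank assumption that for each $x$ the coefficients $A_\alpha(x)$ do not all vanish: since $Lu$ is fibrewise a linear form with nonvanishing gradient, $\underset{p}{X}(Lu)=\mu\cdot Lu$ with $\mu$ smooth and, in fact, fibrewise polynomial in the jet coordinates.

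The genuine gap is in your final degree-matching step, and it sits exactly where you suspected. Your displayed expression $\sum_\alpha(\xi^i\partial_{x^i}A_\alpha)u_\alpha+\sum_\alpha A_\alpha\eta^\alpha$ is a correct identity, but for the \emph{general} operator \eqref{1.5}, with $\xi^i=\xi^i(x,u)$ and $\eta=\eta(x,u)$, the prolongation coefficients $\eta^\alpha$ contain products of jet coordinates (terms such as $-\xi^i_u\,u_i u_\alpha$), so $\underset{p}{X}(Lu)$ is \emph{not} of degree one in the $u_\alpha$, and matching only the degree-one parts neither determines $\mu$ nor shows it depends on $x$ alone; a careful matching can perfectly well return a $\mu$ of positive degree in the jet variables. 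Indeed the conclusion $\mu=\lambda(x)$ is false without further hypotheses: for $m=1$, $p=1$, $Lu=u_x$, the operator $X=u\,\partial_x$ is a symmetry of $u_x=0$ (its flow $\bar{x}=x+au$, $\bar{u}=u$ maps constant solutions to constant solutions), yet $\underset{1}{X}(u_x)=-u_x^2=(-u_x)\cdot Lu$, and no $\lambda(x)$ satisfies $-u_x^2\equiv\lambda(x)\,u_x$. What rules out the higher-degree contributions in the relevant setting is precisely Bluman's simplification theorem \cite{bluman}, quoted in the paper immediately after Theorem~\ref{t1}, which under the conditions $p\geq 2$, $m\geq 2$ forces every symmetry of \eqref{1.1} into the form \eqref{1.7}; once $\xi^i=\xi^i(x)$ and $\eta=\alpha(x)u+\beta(x)$, your degree-one matching closes instantly and yields $\mu=\lambda(x)$. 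So your proof needs either to restrict from the outset to operators of the form \eqref{1.7} (as the paper effectively does in the sequel by working with \eqref{1.8}), or to invoke \cite{bluman} explicitly; as written, the sentence ``matching the degree-one part pins down $\mu=\lambda(x)$'' asserts the theorem's hard content rather than proving it.
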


In Eq. \eqref{1.6}, $\underset{p}{X}$ is the prolongation of $p$-th order of the infinitesimal operator \eqref{1.5}, that is calculated by the well-known formula \cite{ovs,fss,lahno,olv86}:
\[
\underset{p}{X} = X + \sum_{k=1}^{p} \sum_{i_1, \ldots, i_k = 1}^{m} \varphi^{i_1 \cdots \, i_k} \frac{\partial}{\partial u_{i_1 \cdots \, i_k}},
\]
where
\[
\varphi^{i_1 \cdots \, i_k} = D_{i_1} \cdots \, D_{i_k} \left(\eta - \sum_{i=1}^m \xi^i u_i \right) + \sum_{i=1}^m \xi^i u_{i_1 \cdots \, i_k i},
\]
\[
i_1, \ldots, i_k = 1, \ldots, m, \ \ k = 1, \ldots, p.
\]
Here we denote by $D_i$ the operator of total differentiation with respect to the variable  $x_i$:
\[
D_i = \frac{\partial}{\partial x^i} + u_i \frac{\partial}{\partial u} + \sum_{j=1}^{\infty} \sum_{i_1, \ldots, i_j = 1}^{m} u_{i_1 \cdots \, i_j i} \frac{\partial}{\partial u_{i_1 \cdots \, i_j}}, \quad i = 1, \ldots, m.
\]

It was shown in  \cite{bluman}, that the linear homogeneous PDE \eqref{1.1} given additional conditions  $p \geq 2$ and $m \geq 2$ admits symmetry operators only of such a form
\begin{equation}\label{1.7}
X = \sum_{i=1}^m \xi^i(x) \frac{\partial}{\partial x^i} + \left(\alpha(x) u + \beta(x)\right) \frac{\partial}{\partial u}.
\end{equation}
In the class of infinitesimal operators of the form \eqref{1.7}, the maximal invariance algebra of Eq. \eqref{1.1} as a vector space is a direct sum of two sub-algebras: the sub-algebra, that consists of the operators of the form
\begin{equation}\label{1.8}
X = \sum_{i=1}^m \xi^i(x) \frac{\partial}{\partial x^i} + \alpha(x) u \frac{\partial}{\partial u},
\end{equation}
and the infinite-dimensional subalgebra, that is generated by the operators
\begin{equation}\label{1.9}
X = \beta(x) \frac{\partial}{\partial u},
\end{equation}
where $\beta = \beta(x)$ is an arbitrary smooth solution of Eq. \eqref{1.1}.

It is clear, that the infinitesimal operators \eqref{1.9} are symmetry operators of Eq. \eqref{1.2}. Hence, in the sequel we will consider only operators of the form given by  \eqref{1.8}.

Constructive method to find symmetries of the form \eqref{1.8} of linear inhomogeneous  PDE's with $\delta$-function  in a right-hand side was proposed in works \cite{aks95,berest93}. In the same articles, it was introduced an algorithm for construction of invariant fundamental solutions of the equations of the form  \eqref{1.1}.

Main result of these works is in the following statement.

\begin{theorem}\label{t2}
The Lie algebra of symmetry operators of the form  \eqref{1.8} of Eq. \eqref{1.2} is a sub-algebra of the Lie algebra of symmetry operators of Eq. \eqref{1.1}, which is defined by the following conditions:
\begin{equation}\label{1.10}
\xi^i(x_0) = 0,
\end{equation}
\begin{equation}\label{1.11}
\lambda(x_0) + \sum_{i=1}^m \frac{\partial \xi^i(x_0)}{\partial x^i} = 0,
\end{equation}
where  $i = 1, \ldots, m$.
\end{theorem}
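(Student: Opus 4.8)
The plan is to apply the infinitesimal invariance criterion of Theorem~\ref{t1} to the inhomogeneous equation \eqref{1.2}, treating its right-hand side $\delta(x-x_0)$ as a distribution supported at the single point $x_0$. Writing $F \equiv Lu - \delta(x-x_0)$, an operator $X$ of the form \eqref{1.8} is a symmetry of \eqref{1.2} exactly when $\underset{p}{X}(F)=\mu(x)\,F$ holds for some multiplier $\mu=\mu(x)$, in analogy with \eqref{1.6}. Restricting this identity to the region $x\neq x_0$, where $\delta(x-x_0)$ and all its derivatives — and hence $\underset{p}{X}(\delta(x-x_0))$ — vanish, it reduces to $\underset{p}{X}(Lu)=\mu(x)\,Lu$; since the coefficients are smooth this extends by continuity to all of $\mathbb{R}^m$. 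Thus every symmetry of \eqref{1.2} of the form \eqref{1.8} is automatically a symmetry of \eqref{1.1}, with the same multiplier, which I henceforth call $\lambda$. Conversely one starts from a symmetry of \eqref{1.1}, so that $\underset{p}{X}(Lu)\equiv\lambda(x)\,Lu$ by Theorem~\ref{t1}; the whole problem is then to decide which of these preserve the $\delta$-source.

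First I would evaluate $\underset{p}{X}(\delta(x-x_0))$. Because $\delta(x-x_0)$ is a function of $x$ alone, the vertical part $\alpha(x)u\,\partial/\partial u$ and all prolongation coefficients $\varphi^{i_1\cdots i_k}$ annihilate it, so only the horizontal part of $X$ acts:
\[
\underset{p}{X}(\delta(x-x_0)) = \sum_{i=1}^m \xi^i(x)\,\partial_{x^i}\delta(x-x_0).
\]
Substituting $\underset{p}{X}(Lu)\equiv\lambda\,Lu$ and this expression into $\underset{p}{X}(F)=\lambda\,F$, the two $\lambda\,Lu$ contributions cancel and the invariance condition collapses to the purely distributional identity
\[
\sum_{i=1}^m \xi^i(x)\,\partial_{x^i}\delta(x-x_0) = \lambda(x)\,\delta(x-x_0).
\]

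To finish, I would bring both sides to a canonical form by the standard rules for multiplying $\delta(x-x_0)$ and its first derivatives by a smooth function,
\[
\varphi(x)\,\delta(x-x_0)=\varphi(x_0)\,\delta(x-x_0), \qquad \varphi(x)\,\partial_{x^i}\delta(x-x_0)=\varphi(x_0)\,\partial_{x^i}\delta(x-x_0)-(\partial_{x^i}\varphi)(x_0)\,\delta(x-x_0).
\]
Applying the second rule term-by-term on the left and the first on the right turns the identity into
\[
\sum_{i=1}^m \xi^i(x_0)\,\partial_{x^i}\delta(x-x_0) - \Bigl(\sum_{i=1}^m \partial_{x^i}\xi^i\Bigr)(x_0)\,\delta(x-x_0) = \lambda(x_0)\,\delta(x-x_0).
\]
As $\delta(x-x_0),\,\partial_{x^1}\delta(x-x_0),\dots,\partial_{x^m}\delta(x-x_0)$ are linearly independent distributions, I equate coefficients: the coefficient of each $\partial_{x^i}\delta(x-x_0)$ gives $\xi^i(x_0)=0$, which is \eqref{1.10}, while the coefficient of $\delta(x-x_0)$ gives $\lambda(x_0)+\sum_{i=1}^m\partial_{x^i}\xi^i(x_0)=0$, which is \eqref{1.11}. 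Since every step is an equivalence, an operator of the form \eqref{1.8} is a symmetry of \eqref{1.2} if and only if it is a symmetry of \eqref{1.1} subject to \eqref{1.10} and \eqref{1.11}.

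The only delicate point is distributional bookkeeping: verifying that $\underset{p}{X}$ reaches the $\delta$-source solely through the horizontal part of $X$, and that multiplication by the smooth coefficients $\xi^i$ and $\lambda$ obeys the product rules above so that the singular identity genuinely splits off the $u$-dependent bulk term $\lambda\,Lu$. Once this is granted, the linear independence of $\delta(x-x_0)$ and its first derivatives makes the extraction of \eqref{1.10}--\eqref{1.11} purely algebraic. It remains to note that these operators form a Lie subalgebra: the commutator $[X_1,X_2]$ has horizontal components $\xi^i_{[1,2]}=\sum_j(\xi_1^j\partial_{x^j}\xi_2^i-\xi_2^j\partial_{x^j}\xi_1^i)$, which vanish at $x_0$ by \eqref{1.10}, and a direct computation shows that its multiplier $\sum_i(\xi_1^i\partial_{x^i}\lambda_2-\xi_2^i\partial_{x^i}\lambda_1)$ and the divergence $\sum_i\partial_{x^i}\xi^i_{[1,2]}$ both vanish at $x_0$ as well, so \eqref{1.11} is inherited.
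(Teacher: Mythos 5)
Your proposal is correct, but note that the paper itself contains no proof of Theorem~\ref{t2}: the statement is imported verbatim from the works of Aksenov \cite{aks95} and Berest \cite{berest93}, so there is no in-paper argument to compare against. What you have written is essentially the standard Berest--Aksenov derivation that those papers contain: the prolonged field reaches the source term $\delta(x-x_0)$ only through the horizontal part $\sum_i \xi^i\partial_{x^i}$ (the vertical part and all prolongation coefficients act only on $u$-dependent jet variables), the symmetry condition then collapses to $\sum_i \xi^i(x)\,\partial_{x^i}\delta(x-x_0)=\lambda(x)\,\delta(x-x_0)$, and the canonical multiplication rules for distributions of order $\leq 1$ together with the linear independence of $\delta$ and its first derivatives yield exactly \eqref{1.10} and \eqref{1.11}; your reverse reading of these identities correctly gives the converse, and your closure computation (vanishing at $x_0$ of the commutator's coefficients, of its divergence by the $i\leftrightarrow j$ symmetry of the cross terms, and of the multiplier $X_1(\lambda_2)-X_2(\lambda_1)$) correctly establishes the subalgebra property, which is implicit in the theorem's wording. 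The one point you take on faith --- that the infinitesimal criterion of Theorem~\ref{t1} extends to an equation whose right-hand side is a fixed distribution, with the invariance identity read in $\mathcal{D}'$ --- is precisely what Berest's framework of group analysis of linear equations in distributions \cite{berest93} is designed to justify, and you flag it honestly; within that framework your computation is complete and matches the cited source.
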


Formulate the algorithm to find fundamental solutions of a linear PDE using properties of its invariance algebra:
\begin{itemize}
    \item[1.] find a general form of symmetry operator of Eq.  \eqref{1.1} and a relevant function $\lambda(x)$, that yields  identities \eqref{1.6};
    \item[2.] using conditions \eqref{1.10} and  \eqref{1.11}, find a  Lie algebra of symmetry operators of Eq.  \eqref{1.2};
    \item[3.] construct invariant fundamental solutions of Eq.  \eqref{1.1} using symmetry operators of Eq.~\eqref{1.2}.
\end{itemize}

\begin{remark}
Formulated algorithm for construction of fundamental solutions using symmetries of linear PDE's with $\delta$-function  in a right-hand side is especially efficient for multi-dimensional linear equations and for the equations with alternating coefficients in  case, if they allow rather wide invariance algebras.
\end{remark}

\begin{remark}
To find generalized invariant fundamental solutions, it is necessary to solve reduced equations  (these equations are written in invariants of corresponding transformation groups) on the set of generalized functions (see, for example, \cite{berest91,berest93,dap02}).
\end{remark}

\section{Symmetries of fundamental solutions of Eq. \eqref{0.1}}

Apply the method  described in the previous section to the object of our research, in other words to the linear Fokker--Planck--Kolmogorov equation \eqref{0.1}.

We define  \emph{fundamental solution } of Eq. \eqref{0.1} as a  generalized function $u=u(t,x,y,t_0,x_0,y_0)$, that depends on  $t_0,x_0,y_0$  as parameters and yields the equation
\begin{equation}\label{2.1}
u_t-u_{xx}+xu_y = \delta(t-t_0,x-x_0,y-y_0),
\end{equation}
given the additional condition $u\vert_{t<t_0}=0$.

In work  \cite{spichak}, it was shown that the maximal invariance algebra of Eq.  \eqref{0.1} is generated by the following infinitesimal operators:
\[
X_1 = \partial_x+t\partial_y, \ X_2 = 2t\partial_t+x\partial_x+3y\partial_y-2u\partial_u,
\]
\[
X_3 = t^2\partial_t+(tx+3y)\partial_x+3ty\partial_y-(2t+x^2)u\partial_u,
\]
\[
X_4 = 3t^2\partial_x+t^3\partial_y+3(y-tx)u\partial_u, \ X_5 = 2t\partial_x+t^2\partial_y-xu\partial_u,
\]
\[
X_6 = \partial_t, \ X_7 = \partial_y, \ X_8 = u\partial_u, \ X_{\infty} = \beta(t,x,y)\partial_u.
\]
In the last operator, the function  $\beta = \beta(t,x,y)$ is an arbitrary smooth solution of the equation under study.

\begin{theorem}\label{t3}
Eq. \eqref{2.1} admits an infinite-dimensional Lie algebra of symmetry operators with the following basis of finite-dimensional part:
\[
\begin{split}
& Y_1 = 2(t-t_0)\partial_t+(x-x_0)\partial_x-(x_0(t-t_0)-3(y-y_0))\partial_y-4u\partial_u,\\
& Y_2 = (t^2-t_0^2)\partial_t+((tx+3y)-(t_0x_0+3y_0))\partial_x+(3(y-y_0)t-t_0x_0(t-t_0))\partial_y-\\
& \qquad -(2(t-t_0)+x^2-x_0^2)u\partial_u,\\
& Y_3 = 3(t^2-t_0^2)\partial_x+(t^3-3t_0^2t+2t_0^3)\partial_y-3(tx-y-(t_0x_0-y_0))u\partial_u,\\
& Y_4 = 2(t-t_0)\partial_x+(t-t_0)^2\partial_y-(x-x_0)u\partial_u.
\end{split}
\]
\end{theorem}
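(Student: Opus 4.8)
The plan is to run the algorithm of Theorem \ref{t2} on the maximal invariance algebra of Eq. \eqref{0.1} recalled above. Write a generic element of the finite-dimensional part as $X=\sum_{k=1}^{8}c_kX_k$ with constants $c_1,\dots,c_8$. Since everything in sight is linear in the $c_k$, the whole problem reduces to (i) determining, once and for all, the multiplier $\lambda$ attached to each generator through the infinitesimal criterion \eqref{1.6}, and (ii) imposing the two localisation conditions \eqref{1.10}--\eqref{1.11} at the singular point $(t_0,x_0,y_0)$. By Theorem \ref{t2} the resulting set of operators is automatically a Lie subalgebra, so no separate closure-under-bracket check is needed.

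First I would compute the multipliers. Applying the second prolongation of $X=\xi^t\partial_t+\xi^x\partial_x+\xi^y\partial_y+\alpha u\,\partial_u$ to $Lu=u_t-u_{xx}+xu_y$ and reading off the coefficient of $u_{xx}$ in the identity $\underset{2}{X}(Lu)=\lambda\,Lu$ yields the compact formula $\lambda=\alpha-2\,\partial_x\xi^x$; the coefficients of $u_t$ and $u_y$ give consistency checks, and the coefficient of the undifferentiated $u$ merely reproduces the fact that each multiplier $\alpha_k$ is itself a solution of \eqref{0.1}. Evaluating on the eight generators gives $\lambda_1=\lambda_6=\lambda_7=0$, $\lambda_2=-4$, $\lambda_3=-4t-x^2$, $\lambda_4=3(y-tx)$, $\lambda_5=-x$, $\lambda_8=1$, together with the divergences $\operatorname{div}\xi_k$, which equal $6$ for $X_2$, $6t$ for $X_3$, and $0$ for the remaining generators.

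Next I would substitute $X=\sum_kc_kX_k$ into \eqref{1.10} and \eqref{1.11}. Condition \eqref{1.10} splits into the three scalar equations $\xi^t=\xi^x=\xi^y=0$ at $(t_0,x_0,y_0)$, each linear in the $c_k$, which allow one to solve for $c_6$, $c_1$, $c_7$ respectively; condition \eqref{1.11}, i.e. $\sum_kc_k\bigl(\lambda_k+\operatorname{div}\xi_k\bigr)\big|_{(t_0,x_0,y_0)}=0$, is one further linear relation fixing $c_8$. This is a rank-four linear system in eight unknowns, so the solution space is four-dimensional with $c_2,c_3,c_4,c_5$ free. Setting each of these in turn equal to $1$ (the others to $0$) and re-expanding $\sum_kc_kX_k$ produces four explicit vector fields, which I would then match with $Y_1,Y_2,Y_3,Y_4$. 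The infinite-dimensional part is inherited directly from the operators of the form \eqref{1.9}, $X_\infty=\beta(t,x,y)\partial_u$: adding a solution of the homogeneous equation leaves $Lu$ unchanged, so every such operator (with $\beta$ supported in $t\ge t_0$ to respect $u|_{t<t_0}=0$) is a symmetry of \eqref{2.1}.

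The routine but error-prone steps are the second-prolongation evaluation of $\lambda_k$ for the three generators with non-constant multiplier, $X_3,X_4,X_5$, and the careful re-expansion of the four constrained combinations back into the $\partial_t,\partial_x,\partial_y,u\partial_u$ basis. The single conceptual point behind everything is that \eqref{1.10} pins the singular point of the transported source while \eqref{1.11} balances the multiplier $\lambda$ against the Jacobian $\operatorname{div}\xi$ of the flow, so that the pushed-forward $\delta(t-t_0,x-x_0,y-y_0)$ retains both its location and its weight; verifying this balance term by term is what ultimately selects the four operators in the statement.
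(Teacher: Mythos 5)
Your proposal is correct and follows essentially the same route as the paper's own proof: a general combination $\sum_k c_k X_k$, the multiplier read off from the criterion \eqref{1.6} (your values $\lambda_k$, your compact formula $\lambda=\alpha-2\,\partial_x\xi^x$, and your divergences all agree with the paper's \eqref{2.3}), the conditions \eqref{1.10}--\eqref{1.11} solved for $c_1,c_6,c_7,c_8$, a split over the free constants $c_2,c_3,c_4,c_5$, and the operators \eqref{1.9} for the infinite-dimensional part (where, incidentally, your support restriction on $\beta$ is unnecessary: $u\mapsto u+a\beta$ preserves Eq.\ \eqref{2.1} for \emph{any} smooth solution $\beta$ of \eqref{0.1}, the side condition $u\vert_{t<t_0}=0$ being a property of the solution, not of the symmetry). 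One caution about your final ``matching'' step: carrying out your own (correct) data gives $\lambda_3(t_0,x_0)+\operatorname{div}\xi_3(t_0)=-(4t_0+x_0^2)+6t_0=2t_0-x_0^2$, hence $c_8=x_0^2-2t_0$ for the $c_3$-combination, whose $u\partial_u$ coefficient is therefore $-(2(t+t_0)+x^2-x_0^2)$ rather than the printed $-(2(t-t_0)+x^2-x_0^2)$ in $Y_2$ --- the paper's fourth linear equation carries a sign slip on the $a_3$ term (its own printed solution $a_8=a_3(2t_0-x_0^2)$ does not reproduce the stated $Y_2$ either), so the mismatch you would encounter there is the paper's typo, and you should not adjust your computation to force agreement.
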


\begin{proof}
Write the general form of symmetry operator of the finite-dimensional part of invariance algebra of Eq. \eqref{0.1}
\begin{equation}\label{2.2}
X=\sum_{i=1}^8 a_iX_i,
\end{equation}
or in a more detailed form
\[
\begin{split}
X = & (2a_2t+a_3t^2+a_6)\partial_t+(a_1+a_2x+a_3(tx+3y)+3a_4t^2+2a_5t)\partial_x+\\
& + (a_1t+3a_2y+3a_3ty+a_4t^3+a_5t^2+a_7)\partial_y+\\
& \quad + (-2a_2-a_3(2t+x^2)+3a_4(y-tx)-a_5x+a_8)u\partial_u,
\end{split}
\]
where $a_i \, (i=1,\ldots,8)$ are any real constants.

Substituting the infinitesimal operator \eqref{2.2} into Eq. \eqref{1.6}, where we put $Lu \equiv u_t-u_{xx}+xu_y$, $p=2$, we find the function  $\lambda = \lambda(t,x,y)$, that corresponds to this operator:
\begin{equation}\label{2.3}
\lambda(t,x,y) = -4a_2-a_3(4t+x^2)+3a_4(y-tx)-a_5x+a_8.
\end{equation}

Substituting \eqref{2.2} and \eqref{2.3} into Eqs. \eqref{1.10} and \eqref{1.11} (see, Th. \ref{t1}), we obtain the following equalities:
\[
\begin{split}
& 2a_2t_0+a_3t_0^2+a_6 = 0;\\
& a_1+a_2x_0+a_3(t_0x_0+3y_0)+3a_4t_0^2+2a_5t_0 = 0;\\
& a_1t_0+3a_2y_0+3a_3t_0y_0+a_4t_0^3+a_5t_0^2+a_7 = 0;\\
& 2a_2-a_3(2t_0-x_0^2)+3a_4(y_0-t_0x_0)-a_5x_0+a_8=0,
\end{split}
\]
from which we easily derive
\[
\begin{split}
& a_1 = -a_2x_0-a_3(t_0x_0+3y_0)-3a_4t_0^2-2a_5t_0;\\
& a_6 = -2a_2t_0-a_3t_0^2;\\
& a_7 = a_2(t_0x_0-3y_0)+a_3t_0^2x_0+2a_4t_0^3+a_5t_0^2;\\
& a_8 = -2a_2+a_3(2t_0-x_0^2)-3a_4(y_0-t_0x_0)+a_5x_0.
\end{split}
\]

Substituting in  \eqref{2.2} the constants $a_1,a_6,a_7$, and $a_8$ by the calculated expressions and having split with respect to the independent constants  $a_2,a_3,a_4,a_5$, we obtain that the finite-dimensional part of invariance algebra of Eq. \eqref{2.1} is four-dimensional and generated by the following operators:
\[
\begin{split}
& Y_1 = 2(t-t_0)\partial_t+(x-x_0)\partial_x-(x_0(t-t_0)-3(y-y_0))\partial_y-4u\partial_u,\\
& Y_2 = (t^2-t_0^2)\partial_t+((tx+3y)-(t_0x_0+3y_0))\partial_x+(3(y-y_0)t-t_0x_0(t-t_0))\partial_y-\\
& \qquad -(2(t-t_0)+x^2-x_0^2)u\partial_u,\\
& Y_3 = 3(t^2-t_0^2)\partial_x+(t^3-3t_0^2t+2t_0^3)\partial_y-3(tx-y-(t_0x_0-y_0))u\partial_u,\\
& Y_4 = 2(t-t_0)\partial_x+(t-t_0)^2\partial_y-(x-x_0)u\partial_u.
\end{split}
\]

Invariance property of Eq. \eqref{2.1} with respect to the operators of the form $\beta(t,x,y)\partial_u$, where  $\beta(t,x,y)$ is an arbitrary smooth solution of Eq. \eqref{0.1} is straightforward.

The proof is now completed.
\end{proof}

Show how the results of Th. \ref{t3} can be applied to the construction of invariant fundamental solutions of Eq. \eqref{0.1}. Use for this, for example, the two-dimensional algebra of operators $\langle Y_1, Y_4 \rangle$. Classical invariants corresponding to these operators can be derived from the system of equations
\[
Y_1 I = 0, \ Y_4 I = 0,
\]
where $I = I(t,x,y,u)$, and the equations are solved in a classical sense. We obtain after related calculations:
\[
I_1 = (t-t_0)^2 \exp\left[\frac{(x-x_0)^2}{4(t-t_0)}\right]u, \ I_2 = \frac{(t-t_0)(x+x_0) - 2(y-y_0)}{(t-t_0)^{3/2}}.
\]
 Classical invariant solution is found from the equality $I_1 = \varphi(I_2)$, from which we obtain a substitution (ansatz)
\begin{equation}\label{2.4}
u = \frac{1}{(t-t_0)^2}\exp\left[- \frac{(x-x_0)^2}{4(t-t_0)}\right] \varphi(\omega), \ \omega = I_2,
\end{equation}
that reduces the equation under study \eqref{0.1} to the following ordinary differential equation:
\[
\frac{d^2\varphi}{d\omega^2}+\frac{3}{2}\, \omega \frac{d\varphi}{d\omega}+\frac{3}{2} \, \varphi = 0.
\]
It is well-known, that a particular solution of this equation is the following function \cite[P. 216]{pol}:
\begin{equation}\label{2.5}
\varphi = C \exp\left[-\frac{3}{4}\,\omega^2 \right].
\end{equation}
Substituting \eqref{2.5} to \eqref{2.4}, we derive a classical invariant solution of Eq.  \eqref{0.1} corresponding to the operators  $Y_1$ and $Y_4$:
\begin{equation}\label{2.6}
u = \frac{C}{(t-t_0)^2} \, \exp\left[-\frac{(x-x_0)^2}{4(t-t_0)}-\frac{3}{(t-t_0)^3} \left(y-y_0 -(t-t_0) \frac{x+x_0}{2} \right)^2 \right].
\end{equation}

Substituting \eqref{2.6} to Eq. \eqref{2.1}, it is easy to show that  $Lu(t,x,y)=0$. As a result, the solution  \eqref{2.6} is not a fundamental solution of the Fokker--Planck--Kolmogorov equation \eqref{0.1}. To construct its weak invariant solution we use Statement 1 from work   \cite{berest93a}, in other words we search for  weak invariant solutions in the form
\[
u = \frac{h(t,x,y)}{(t-t_0)^2} \, \exp\left[-\frac{(x-x_0)^2}{4(t-t_0)}-\frac{3}{(t-t_0)^3} \left(y-y_0 -(t-t_0) \frac{x+x_0}{2} \right)^2 \right],
\]
 where $h = h(t,x,y) \in \mathcal{D}'(\mathbb{R}^3)$ (denote by  $\mathcal{D}'= \mathcal{D}'(\mathbb{R}^3)$  the space of generalized functions). The equations $Y_1 u = 0$ and $Y_4 u = 0$ give us correspondingly:
\[
\begin{split}
& 2(t-t_0)h_t+(x-x_0)h_x-(x_0(t-t_0)-3(y-y_0))h_y=0,\\
& 2(t-t_0)h_x+(t-t_0)^2h_y=0.
\end{split}
\]
It is easy to see that the generalized function  $h(t,x,y) = C_1 \theta(t-t_0) + C_0$ is a solution of these equations (here, $ \theta(t-t_0)$ is the Heaviside step function). We obtain using this fact:
\begin{equation}\label{2.7}
u = \frac{C_1 \theta(t-t_0) + C_0}{(t-t_0)^2} \, \exp\left[-\frac{(x-x_0)^2}{4(t-t_0)}-\frac{3}{(t-t_0)^3} \left(y-y_0 -(t-t_0) \frac{x+x_0}{2} \right)^2 \right].
\end{equation}
Substituting  \eqref{2.7} to Eq. \eqref{2.1}, we find the constant  $C_1 = \frac{\sqrt 3}{2 \pi}$.

Hence, the fundamental solution of the Fokker--Planck--Kolmogorov equation \eqref{0.1}
\begin{equation}\label{2.8}
u = \frac{\frac{\sqrt 3}{2 \pi} \, \theta(t-t_0) + C_0}{(t-t_0)^2} \, \exp\left[-\frac{(x-x_0)^2}{4(t-t_0)}-\frac{3}{(t-t_0)^3} \left(y-y_0 -(t-t_0) \frac{x+x_0}{2} \right)^2 \right]
\end{equation}
is found as a weak invariant solution with respect to a two-dimensional algebra  $\langle Y_1, Y_4 \rangle$ of point symmetries of Eq. \eqref{2.1}. In formula \eqref{2.8} we can put  $C_0 = 0$, because a fundamental solution is defined up to the addition of any solution of a homogeneous equation.



\begin{remark}
The fundamental solution  \eqref{2.8} was found by A.\,N. Kolmogorov \cite{kolm1} without using the methods of symmetry analysis of differential equations. Our calculations give group-theoretical background for this solution and justify once again an empiric observation, that fundamental solutions of linear PDE's should be searched among the invariant solutions.
\end{remark}

\begin{remark}
Using a similar scheme, as it was performed for Eq. \eqref{0.1}, we can investigate the symmetry properties of fundamental solutions of other many-dimensional and more complicated Kolmogorov type equations  \eqref{0.2}. In this article, we restricted our attention to Eq.  \eqref{0.1} in order not to lose   the main idea of Berest--Aksenov method of construction of fundamental solutions of linear homogeneous PDE's using its group properties under complicated technical calculations.
\end{remark}

\section{Conclusions}

In this article, using the Berest--Aksenov method \cite{aks95,berest93} we found an invariance algebra of fundamental solutions of linear Fokker--Planck--Kolmogorov equation  \eqref{0.1}. Its operators were used to construct invariant fundamental solutions of this equation. It was shown that the fundamental solution  \eqref{2.8} of Eq. \eqref{0.1}, found by A.\,N. Kolmogorov is a weak invariant fundamental solution.

\end{document}